\documentclass{amsart}

\usepackage{cite}
\usepackage{latexsym}
\usepackage{amsmath, amssymb,amsfonts,amscd}
\usepackage{mathrsfs}
\usepackage{graphicx}
\usepackage{url}
\usepackage{enumerate}

\hyphenation{semi-stable}
\newtheorem{theorem}{Theorem}[section]
\newtheorem{lemma}[theorem]{Lemma}

\newtheorem{remark}[theorem]{\bf Remark}
\newtheorem{definition}[theorem]{Definition}

\newcommand{\RR}{\mathbb{R}}
\newcommand{\PP}{\mathbb{P}}

\newcommand{\AF}{\mathbb{A}}

\newcommand{\mb}[1][]{\mathbf}
\newcommand{\m}[1]{\mb{m_{#1}}}
\newcommand{\x}{\mb{x}}

\newcommand{\bs}[1][]{\boldsymbol}
\newcommand{\bt}{\boldsymbol{\tau}}
\newcommand{\eqn}{\begin{eqnarray}}
\newcommand{\feqn}{\end{eqnarray}}

\begin{document}

\title[The bifurcation curve]{TDOA--based localization in two dimensions:\\
the bifurcation curve}

\author[M. Compagnoni, R. Notari]{Marco Compagnoni, Roberto Notari}
\address{Dipartimento di Matematica,
Politecnico di Milano, I-20133 Milano, Italia}
\email{marco.compagnoni@polimi.it\\
roberto.notari@polimi.it}

\maketitle


\maketitle

\begin{abstract}
In this paper, we complete the study of the geometry of the TDOA
map that encodes the noiseless model for the localization of a
source from the range differences between three receivers in a
plane, by computing the Cartesian equation of the bifurcation
curve in terms of the positions of the receivers. From that
equation, we can compute its real asymptotic lines. The present manuscript completes the analysis of \cite{nostro}. Our result is useful to check if a source belongs or is closed to the bifurcation curve, where the localization in a noisy scenario is ambiguous.
\end{abstract}


\section{Introduction}

The problem of localizing an object in space is a classical topic
in the literature of space--time signal processing. The first
studies on the subject date back to World War II, motivating the creation of
the two-dimensional LOng RAnge Navigation (LORAN) radio
positioning system. LORAN was based on the measurements of the time
differences of arrival (TDOA) of synchronized radio signals
originated from three distinct known emitters and it required the
use of hyperbolic charts to determine the position of the receiver
\cite{Getting1993}.

Nowadays, LORAN has been replaced primarily by Global Positioning
System (GPS), but the mathematical models underlying LORAN and
(three--dimensional) GPS localization are essentially the same
\cite{Siouris1993}. TDOA--based localization of unknown point
sources is very widespread and popular also in acoustic signal
processing, because it is characterized by a reduced computational
cost with respect to other solutions and robustness against noise
\cite{Huang2004}.

The main contributions to the study of TDOA--based localization
come from the engineering literature, where the authors usually
focus on the development of algorithms for locating the source
starting from empirical TDOA data, affected by (mainly, Gaussian
distributed) noise. Relevant examples are
\cite{Smith1987,Smith1987a,Abel1987,Schau1987,Huang2000,Huang2001,
Huang2004,Huang2004a,Gillette2008a,Beck2008}. A classification of
the different methods can be done according to the proposed
solution: maximum likelihood principle versus least--squares
estimators, linear approximation versus numerical optimization,
and finally iterative versus closed forms--algorithms (for a
resume of the most significant of these methods see
\cite{Bestagini2013}).

However, all of them are based on the model of geometric
propagation of the signal in an isotropic and homogeneous medium.
This means that, given a TDOA measurement between two receivers
placed on the Euclidean plane at positions $\mathbf{m}_i$ and
$\mathbf{m}_j$, the locus of source locations that are compatible
with that measurement is one branch of a hyperbola of foci
$\mathbf{m}_i$ and $\mathbf{m}_j$ and whose aperture depends on
the range difference (TDOA times the speed of propagation of the
signal). Therefore, if we consider multiple measurements, one can
readily find the location of the source through the intersection
of branches of hyperbole. The three dimensional localization is
very similar, being equivalent to the intersection of
sheets of hyperboloids.

The mathematical properties of the TDOA--based localization have
been investigated in several manuscripts. In particular, the
closed--form solutions to the intersection problems have been provided
for both configurations of three receivers in a plane and four
receivers in the space
\cite{Schmidt1972,Bancroft1985,Kraus1987,Abel1991,Chauffe1994,Leva1995,
Hoshen1996,Grafarend2002,Awange2002,Coll2009,Spencer2010,Coll2012,nostro}.
However, due to the nonlinearity, it is well known that in these
minimal configurations of receivers there does not exist a unique
admissible position of the source for any given set of TDOA
measurements. In fact, there are open regions in the physical
space where the intersection set is the union of two points and so
the source location is intrinsically ambiguous. This fact is known
in literature as the bifurcation problem and we will name
\emph{bifurcation set} the border between the two domains where the localization problem has respectively one or two solutions.

In \cite{Schmidt1972} it is shown that the bifurcation set is a
curve in two dimensions and a surface in three dimensions. In
\cite{Hoshen1996}, by relating TDOA--based localization and the
ancient Problem of Apollonius \cite{Boyer1989} of drawing a circle
touching three other circles or two circles and a point, Hoshen
was able to analytically describe the bifurcation sets in two and three dimension
in terms of polar and spherical coordinates. More recently, an
analysis of the bifurcation problem in a 2--D scenario has been completed through the use of numerical simulations \cite{Spencer2007}. Source location error analysis in 2--D and 3--D with noisy TDOA data has also recently been considered for both closed form and numerical solutions \cite{Spencer2010}.

A deeper description of the geometric properties of the
bifurcation curve in the 2--D case was given lately by the authors
in \cite{nostro}. Although the statistical model describing
TDOA--based localization is not defined by polynomial functions,
in \cite{nostro} it has been shown that all the relevant objects
are real (semi)--algebraic varieties. In particular, the
bifurcation curve is an algebraic curve and more precisely a
rational quintic curve, whose explicit rational parametrization
was provided. In \cite{nostro}, we were not able to find a
Cartesian polynomial equation.

In this manuscript, we fill this gap by providing the explicit
Cartesian equation of the bifurcation curve in terms of the
positions of the receivers. This is an important step towards
the study of the statistical model of localization based on
TDOAs. Indeed, as observed in \cite{nostro}, locating a source placed around the bifurcation curve is an ill-posed problem and in these situations any localization algorithm has a very low accuracy. On the other hand, checking if a point belongs to, or is close to, a curve is quite a hard problem if one starts from the parametric equation of the curve itself. On the contrary, the two above--mentioned problems can be easily solved from the Cartesian equation of the curve.

The paper is organized as follows. In Section 2, we describe the
deterministic model of the physical problem, and we recall the
main results proved in \cite{nostro}. Section 3 is devoted to the
computation of the Cartesian equation of the bifurcation curve and
of its real asymptotic lines in terms of the positions of the
receivers. In the last Section, we summarize the results, and we
outline a research program for extending our study to both the
real scenario, and the cases not yet covered, such as the 3--D
case, or the planar case with more than $ 3 $ receivers.

\section{State of the art}

In this section we briefly resume the main results of
\cite{nostro} concerning the 2--D localization with three
receivers in a noiseless scenario. In this setting, the physical
world can be identified with the Euclidean plane and, after
choosing an orthogonal Cartesian coordinate system, with $\RR^2$.
On the plane, we have three receivers $\m{i}=(x_{i},y_{i}),\
i=0,1,2$ at known positions and a source $\mb{x} = (x,y)$.
\begin{figure}[htb]
\begin{center}
\resizebox{5.7cm}{!}{
  \includegraphics
  {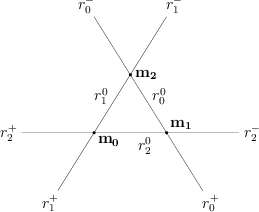}}
  \caption{\label{rette-acustica}Receivers $\m{0},\m{1},\m{2}$ in a generic planar configuration. The line $r_i$ does not contain the receiver $\m{i}$.}
\end{center}
\end{figure}
The corresponding displacement vectors are
\begin{equation}
\mb{d_i}(\mb{x})=\mb{x}-\m{i},\qquad
\mb{d_{ji}}=\m{j}-\m{i},\qquad i,j=0,1,2,
\end{equation}
whose norms are $d_i(\mb{x})$ and $ d_{ji}$, respectively.
Without loss of generality, we assume the speed of propagation of
the signal in the medium to be equal to $1$. In the noiseless
scenario we adopt, the TDOA between each pair of different
microphones is equal to the difference of the ranges (the
so--called pseudorange):
\begin{equation}\label{TDOA}
\tau_{ji}(\mb{x})=d_{j}(\mb{x})-d_{i}(\mb{x}),\quad
i,j=0,1,2.
\end{equation}

The three TDOAs are not independent. In fact, the linear relation
$ \tau_{12}(\x ) = \tau_{10}(\x ) - \tau_{20}(\x ) $ holds for
each $ \mb{x} \in \RR^2.$ Hence, we are allowed to choose a
microphone as the reference one, say $ \m{0},$ and so we consider only
the TDOAs involving $ \m{0},$ without loss of information.  In
\cite{nostro} we collected $ \tau_{10}(\x )$ and $\tau_{20}(\x )$
by defining the TDOA map:
\begin{equation}
\begin{array}{cccc}
\bs{\tau_2}: & \RR^2          & \longrightarrow & \RR^2\\
 & \mb{x}   & \longrightarrow & \quad
(\tau_{10}(\mb{x}),\tau_{20}(\mb{x}))
\end{array}.
\end{equation}

The study of the TDOA map is the heart of the mathematical
characterization of the localization problem and it is the subject
of \cite{nostro}. In fact, the main problems concerning the
localization in the deterministic set--up can be formulated in
terms of $ \bt_2 $: given $ \bt:=(\tau_1,\tau_2)\in\RR^2,$ there
exists a source $\mb{x}$ such that $ \bt_2(\mb{x}) = \bt$ if, and
only if, $ \bt \in \mbox{Im}(\bt_2).$ Moreover, the uniqueness of
such a source $\mb{x}$ can be equivalently written as $ \vert
\bt_2^{-1}(\bt) \vert = 1$.

In this paper we assume that $\m{0},\m{1},\m{2}$ are not
collinear. The interested reader can find the complete analysis of
the aligned configuration in \cite{nostro}. In Figure
\ref{fig:tauimage} we draw the image of $\bt_2$, with receivers
$\m{0}=(0,0),\ \m{1}=(2,0)$ and $\m{2}=(2,2)$.
\begin{figure}[htb]
\begin{center}
\resizebox{5cm}{!}{
  \includegraphics
  {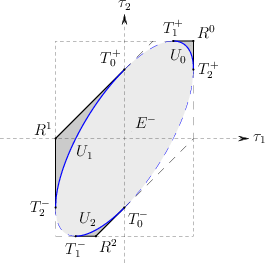}}
  \caption{\label{fig:tauimage}
  The image of $ \bs{\tau_2}$ is the gray subset of the hexagon with continuous and dashed sides. In the light gray region $E^-$ the map $ \bs{\tau_2} $ is $ 1$--to--$1,$ while in the medium gray region $U_0 \cup U_1 \cup U_2$ the map $ \bs{\tau_2} $ is $2$--to--$1.$ The vertices $R^i$ and the continuous part of the border of the hexagon and the blue ellipse $E$ are in the image, and there $ \bs{\tau_2} $ is $ 1$--to--$1.$ The points $T_i^\pm$ and the dashed part of the border of the hexagon and $E$ do not belong to Im($\bs{\tau_2}$).}
\end{center}
\end{figure}

\noindent Im$(\bs{\tau_2})$ is a subset of a convex polytope
$P_2$, the hexagon defined by:
\begin{equation}
\left\{ \begin{array}{l}
-d_{10} \leq \tau_1 \leq d_{10} \\
-d_{20} \leq \tau_2 \leq d_{20} \\
-d_{21} \leq \tau_2 - \tau_1 \leq d_{21}
\end{array} \right. .
\end{equation}
There exists a unique ellipse $E$ tangent to each facet of $P_2$
(at the six points $T_i^\pm,\ i=0,1,2$), the one defined by
\begin{equation}\label{eq:a}
a(\bs{\tau}) = \Vert \tau_2 \mb{d_{10}} - \tau_1 \mb{d_{20}} \Vert^2 -
\Vert \mb{d_{10}} \wedge \mb{d_{20}} \Vert^2=0.
\end{equation}
We name $E^-$ and $E^+$ the interior and the exterior region of
the ellipse, respectively, while $ U_0, U_1, U_2$ are the three
disjoint connected components of $\mathring{P_2} \setminus
(E^-\cup E)$. Using this notation, we have
\begin{equation}
\mbox{Im}(\bs{\tau_2}) = E^- \cup \bar{U}_0 \cup \bar{U}_1 \cup
\bar{U}_2 \setminus \{ T_0^\pm, T_1^\pm, T_2^\pm \}
\end{equation}
and in particular
\begin{equation}
\vert\bs{\tau_2}^{-1}(\bt)\vert=
\begin{cases}
2 & \text{if }\ \bt \in U_0 \cup U_1 \cup U_2,\\
1 & \text{if }\ \bt \in \mbox{Im}(\bs{\tau_2}) \setminus U_0 \cup U_1 \cup U_2.
\end{cases}
\end{equation}
Furthermore, it holds:
\begin{enumerate}[(a)]
\item $\bt \in E$ if, and only if, the hyperbola branches
\begin{equation} \label{hyp-br} A_i(\bt ) = \{ \x \ \vert \ d_i(\x ) - d_0(\x ) = \tau_i
\}, i = 1, 2,\end{equation} have one of the two asymptotic lines
parallel each other. This means that there could exist a source at
a great distance from the microphones, in comparison to $d_{10}$
and $d_{20}$. If $ \bs{\tau} \in E\cap\mbox{Im}(\bs{\tau_2})$ the
hyperbola branches meet at a point at finite distance, too, which
corresponds to another admissible source position. \item $\bt\in
E^-$ if, and only if, the hyperbola branches $ A-1(\bt ) $ and $
A_2(\bt ) $ meet at one simple point and so, for a given $\bt$,
there exists a unique source position $\x$. In this case the
localization is still possible even in a noisy scenario, but we
lose in precision and stability as $ \bt $ gets close to $E$.
\item $\bt\in U_0 \cup U_1 \cup U_2$ if, and only if, the previous
hyperbola branches intersect at two distinct points, which means
that for a given $\bt$ there are two admissible source positions.
The two solutions overlap if $\bt\in\partial P_2$, which
corresponds to the tangential intersection of the branches.
\end{enumerate}

Each preimage $\x$ of a given $ \bs{\tau} \in
\mbox{Im}(\bs{\tau_2})$, i.e. the admissible source locations, is
given in a very compact form through the formalism of exterior
algebra over $ 3$--dimensional Minkowski space--time as:
\begin{equation}\label{eq:inv-image}
\mb{x}(\bs{\tau}) = \mb{L_0}(\bs{\tau}) +
\lambda(\bt) \ast((\tau_2 \mb{d_{10}} - \tau_1 \mb{d_{20}})\wedge \mb{e_3}),
\end{equation}
where $ \lambda $ is one of the real negative solutions of a
certain quadratic equation (see \cite{nostro} for the full
details). Roughly speaking, in the physical plane there are two
different regions: the preimage of the interior of the ellipse
$E^-$, where the TDOA map is $1$--to--$1$ and the source
localization is possible, and the preimage of the three disjoint regions
$U_i,\ i=0,1,2$, where the map is $2$--to--$1$ and there is no way
to uniquely locate the source. By definition, the region of transition is
exactly the bifurcation curve $\tilde{E}$, that can be
characterized as the inverse image of the ellipse $E$. In the
remaining part of this section we recall the main results on the
behavior of $\bs{\tau_2}$ in the physical plane. In Figure
\ref{fig:x-plane} we give two examples of bifurcation curve and
the relative sets.
\begin{figure}[htb]
\begin{center}
\resizebox{11cm}{!}{
  \includegraphics
  {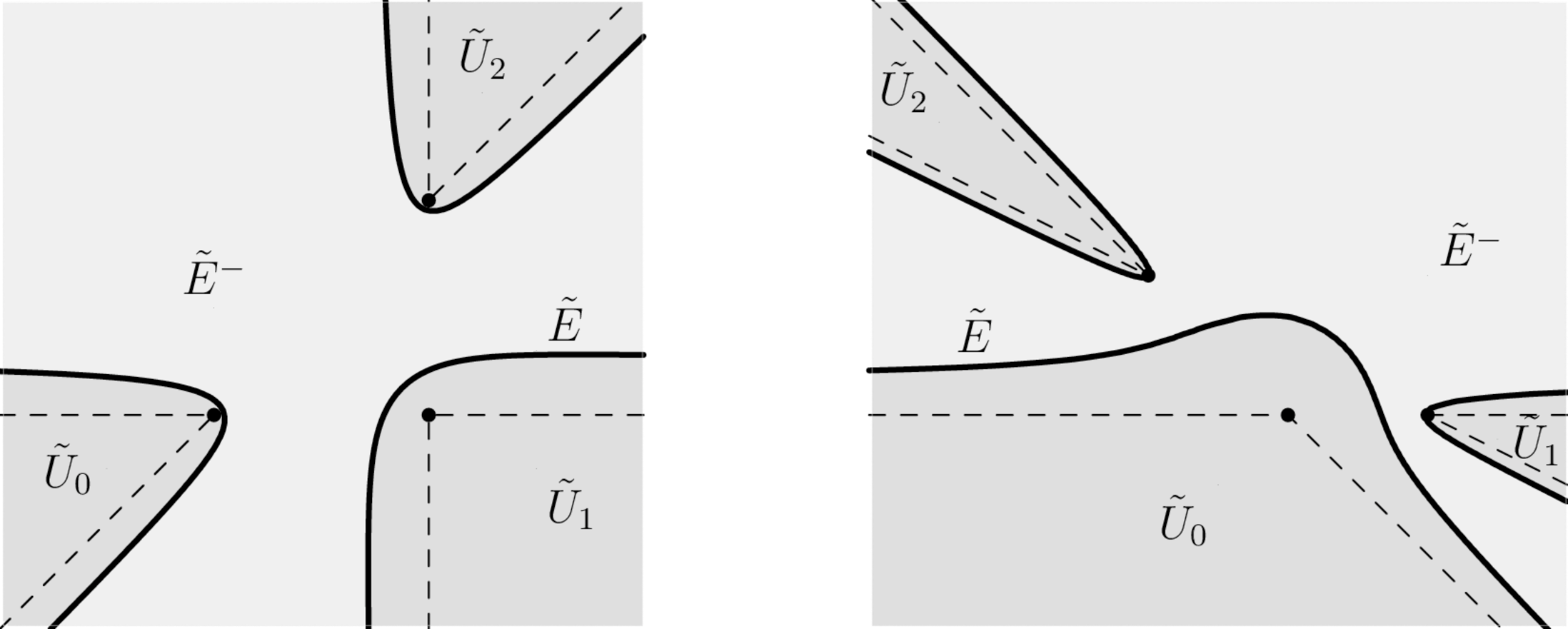}}
\caption{\label{fig:x-plane}Two examples of the different localization regions and the curve $ \tilde{E} $ in the $x$--plane. The microphones are the marked points $ \m{0}=(0,0),\ \m{1}=(2,0),$ and either $ \m{2}=(2,2) $ on the left, or $ \m{2}=(-2,2) $ on the right. Each curve $\tilde{E}$ separates the light gray region $\tilde{E}^-$, where the map $\bs{\tau_2}$ is 1--1 and it is possible to locate the source, and the medium gray region $\tilde{U}_0\cup\tilde{U}_1\cup\tilde{U}_2$, where $\bs{\tau_2}$ is 2--1 and the localization is not unique. On the dashed lines the localization is possible but unstable.}
\end{center}
\end{figure}
\begin{enumerate}[(a)]
\item As we said above, $\tilde{E} =\bs{\tau_2}^{-1}(E)$. If $\bt\in E$,
$\lambda(\bt)$ is a rational function and \eqref{eq:inv-image}
becomes a rational parametrization of $\tilde{E}$. In
\cite{nostro} it has been proved that $\tilde{E}$ is a rational
algebraic curve of degree $5$, singular on the complex plane but
smooth on the real one. \item The real part of $ \tilde{E} $
consists of three disjoint and unbounded arcs, one for each arc of
$ E $ contained in $ \mbox{Im}(\bs{\tau_2}).$ In particular, when
$\bt $ gets close to one point among the $T_i^\pm$'s in $ E \cap C
$, the point $\x$ on $ \tilde{E} $ goes to infinity. This fact and
the invariance of \eqref{eq:inv-image} under the reflection
$\bt\rightarrow -\bt$ imply that the quintic $\tilde{E}$ has three
real ideal points (the ones of the lines $ r_0, r_1, r_2,$), while
the remaining two ones are the (complex)  preimages of the
(complex) ideal points of $ E.$ Finally, the points $ \m{0},
\m{1}, \m{2} $ do not belong to $\tilde{E},$ because their images
via $ \bs{\tau_2} $ are not on $ E.$ \item Let $ \tilde{U}_i $ be
the inverse image of $U_i $ via $ \bs{\tau_2},$ for $ i=0,1,2,$
and $ \tilde{E}^- $ be the inverse image of $ E^-.$ Then,
$\tilde{E}^-, \tilde{U}_0, \tilde{U}_1, \tilde{U}_2 $ are open
subsets of the $ x$--plane, which are separated by the three arcs
of $ \tilde{E} $. On $\tilde{E}^-$ the TDOA map is $1$--to--$1$,
while it is $2$--to--$1$ on each $\tilde{U}_i, \ i=0,1,2$.
Moreover, the dashed half--lines in Figure \ref{fig:x-plane}
outgoing from the receivers divide each $ \tilde{U}_i $ into two
connected components and $ \bs{\tau_2} $ is $ 1$--to--$1 $ on each
of them. \item The source localization is possible if $\bt\in E^-$
and consequently $\x\in \tilde E^-$. Otherwise, assume $ \bs{\tau}
\in U_i$, then there are two admissible sources in the two
disjoint components of $\tilde{U}_i$. As $\bs{\tau}$ gets close to
$E$, then one of its inverse images gets close to a point on
$\tilde{E}$, while the other one goes to infinity. Conversely, if
$ \bs{\tau} $ gets close to $\partial P_2$, the inverse images of
$\bs{\tau}$ come close to each other and get close to a point on
one of the dashed half--lines. As we said before, in a realistic
noisy scenario, we have poor localization in the region close to
$\tilde{E}$.
\end{enumerate}

\section{The algebraic equation of $\tilde{E}$ and its asymptotic lines}

In this section, we use the formalism of exterior algebra over the
Euclidean plane, and we refer to Appendix A of \cite{nostro}) for
a brief introduction and summary on the subject and for the
notation. However, from the general results, it follows that
$$
\ast (\bold{u} \wedge \bold{v} ) = \det \left( \begin{array}{cc}
u_1 & v_1 \\ u_2 & v_2 \end{array} \right) $$ where $ \bold{u} =
u_1 \bold{e}_1 + u_2 \bold{e}_2, \bold{v} = v_1 \bold{e}_1 + v_2
\bold{e}_2 $ and $ (\bold{e}_1, \bold{e}_2 ) $ is an orthonormal
basis of $ \RR^2.$
\begin{definition}\label{def:1}
Let us define:
\begin{enumerate}[(a)]
\item
$
\mb{D_0}(\x)=d_0(\x) \mb{d_{12}},\quad
\mb{D_1}(\x)=d_1(\x) \mb{d_{20}},\quad
\mb{D_2}(\x)=d_2(\x) \mb{d_{01}};
$
\item
$D_i(\x)=\Vert\mb{D_i(\x)}\Vert\quad and\quad
\displaystyle{p_i(\x)=\frac{\mb{D_i}(\x)\cdot \mb{d_0}(\x)}{d_i(\x)}},\ i=0,1,2;$
\item
$W=\ast(\mb{d_{10}} \wedge \mb{d_{20}})$;
\item
$
Q(\x)=D_0(\x)^2+D_1(\x)^2+D_2(\x)^2-W^2;
$
\item
$
P_{ij}(\x)=\mb{D_i}(\x)\!\cdot\!\mb{D_j}(\x)\quad and\quad
\displaystyle{p_{ij}=\frac{P_{ij}(\x)}{d_i(\x)d_j(\x)}},\quad i,j=0,1,2.
$
\end{enumerate}
\end{definition}

\begin{theorem}
An algebraic equation for the quintic curve $\tilde{E}$ is $F(\x)=0$, where:
\begin{equation}\label{eq:CartBif7}
\begin{array}{l}
F(\x)=Q(\x)^4-8\,Q(\x)^2(P_{01}(\x)^2+P_{12}(\x)^2+P_{20}(\x)^2)\,+\\
64\,Q(\x)P_{01}(\x)P_{12}(\x)P_{20}(\x)+16\,(P_{01}(\x)^4+P_{12}(\x)^4+P_{20}(\x)^4)\\
-32\,(P_{01}(\x)^2P_{12}(\x)^2+P_{12}(\x)^2P_{20}(\x)^2+P_{20}(\x)^2P_{01}(\x)^2).
\end{array}
\end{equation}
The polynomial $F(\x)$ is invariant under permutation of the points $\m{0},$
$\m{1},\m{2}$. Expanding \eqref{eq:CartBif7} with respect to $d_0(\x)$, we have:
\begin{equation}\label{eq:CartBif8}
\begin{array}{ll}
F(\x)\!=\!\!\!\!&((W^2+2(d_{20}^2\,p_2(\x)-d_{01}^2\,p_1(\x)))^2+ 16\,p_{12}^2\,p_1(\x)\,p_2(\x))^2+\\
&8\,d_0(\x)^2(-W^4\,p_{12}\,(-8\, d_{01}^2 d_{20}^2+p_{01}\,p_{20}+2W^2)+\\
&2W^2((W^2+2\,p_{12}^2)(W^2+3\,p_{12}^2)(p_2(\x)-p_1(\x))+\\
&3\,p_{12}\,(2\,d_{01}^2\,d_{20}^2+W^2)(d_{20}^2\,p_2(\x)-d_{01}^2\,p_1(\x))+\\
&2\,p_{12}^2\,(p_{01}\,d_{20}^2\,p_2(\x)-p_{20}\,d_{01}^2\,p_1(\x)))+\\
&4(d_{20}^2\,p_2(\x)-d_{01}^2\,p_1(\x))^2
(2(d_{01}^2+d_{20}^2)W^2+7\,p_{01}\,p_{12}\,p_{20})-\\
&8(d_{20}^2\,p_2(\x)-d_{01}^2\,p_1(\x))(W^2p_{12}^2\,(p_2(\x)-p_1(\x))+\\
&(W^2+4p_{12}^2)(p_{01}^2\,p_2(\x)-p_{20}^2\,p_1(\x)))-\\
&16\,p_{12}\,p_1(\x)p_2(\x)(2W^2d_{01}^2d_{20}^2+p_{12}^2\,p_{20}\,p_{01})+\\
&2\,d_0(\x)^2(p_{12}(d_{01}^2+d_{20}^2)(4\,d_{01}^2d_{20}^2W^2+\\
&p_{12}^2(2\,p_{20}\,p_{01}-p_{12}(d_{01}^2+d_{20}^2)))
+d_{01}^2d_{20}^2(4d_{01}^4d_{20}^4-7W^4))-\\
&8Wd_{01}^2d_{12}^2d_{20}^2\!\ast\!\!(\mb{d_{01}}\wedge\mb{d_0}(\x))
\!\ast\!\!(\mb{d_{12}}\wedge\mb{d_0}(\x))
\!\ast\!\!(\mb{d_{20}}\wedge\mb{d_0}(\x))).
\end{array}
\end{equation}
\end{theorem}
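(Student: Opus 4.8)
The plan is to use the characterization $\tilde{E} = \boldsymbol{\tau_2}^{-1}(E)$ from item (a) of Section 2, so that $\x \in \tilde{E}$ if and only if $a(\boldsymbol{\tau_2}(\x)) = 0$ with $a$ as in \eqref{eq:a}. First I would substitute $\tau_1 = d_1(\x) - d_0(\x)$ and $\tau_2 = d_2(\x) - d_0(\x)$ into the vector $\tau_2 \mb{d_{10}} - \tau_1 \mb{d_{20}}$ and regroup by the distances $d_i(\x)$. Using $\mb{d_{12}} + \mb{d_{20}} + \mb{d_{01}} = \mathbf 0$ together with Definition \ref{def:1}(a), this collapses to $\tau_2 \mb{d_{10}} - \tau_1 \mb{d_{20}} = -(\mb{D_0}(\x) + \mb{D_1}(\x) + \mb{D_2}(\x))$. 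Taking squared norms and recalling $W^2 = \Vert \mb{d_{10}} \wedge \mb{d_{20}} \Vert^2$ and $P_{ij}(\x) = \mb{D_i}(\x) \cdot \mb{D_j}(\x)$, I obtain the clean identity $a(\boldsymbol{\tau_2}(\x)) = Q(\x) + 2\,(P_{01}(\x) + P_{12}(\x) + P_{20}(\x))$. Thus $\tilde{E}$ is exactly the locus $Q + 2(P_{01}+P_{12}+P_{20}) = 0$, an equation that still carries square roots through the $P_{ij}$.

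The second and central step is to rationalize this equation. By Definition \ref{def:1} each $P_{ij}(\x) = p_{ij}\,d_i(\x) d_j(\x)$ with $p_{ij}$ constant, so every $P_{ij}$ carries a single radical product, while $Q$, the squares $P_{ij}^2 = p_{ij}^2 d_i^2 d_j^2$, and the triple product $P_{01} P_{12} P_{20} = p_{01} p_{12} p_{20}\,d_0^2 d_1^2 d_2^2$ are all polynomials in $\x$. I would therefore view $Q + 2(P_{01}+P_{12}+P_{20})$ inside the extension of $\RR(\x)$ obtained by adjoining the radicals $d_0, d_1, d_2$, and form the product of its conjugates under the sign changes $d_i \mapsto -d_i$. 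Since the expression depends only on the products $d_i d_j$, there are four distinct conjugates, namely $Q + 2(\pm P_{01} \pm P_{12} \pm P_{20})$ with an even number of minus signs; their product is invariant under all sign changes and hence a polynomial in $\x$. Pairing the four factors and applying the difference-of-squares identity twice produces precisely $F(\x)$ as written in \eqref{eq:CartBif7}. Because the genuine factor $Q + 2(P_{01}+P_{12}+P_{20}) = a(\boldsymbol{\tau_2}(\x))$ is one of the four, $F$ vanishes on $\tilde{E}$, establishing that $F(\x) = 0$ is an algebraic equation for the bifurcation curve.

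For the permutation invariance I would observe that $F$ is assembled only from $Q$, the symmetric combinations of $P_{01}^2, P_{12}^2, P_{20}^2$, and the triple product $P_{01} P_{12} P_{20}$; a permutation of $\m{0}, \m{1}, \m{2}$ permutes $\mb{D_0}, \mb{D_1}, \mb{D_2}$ up to signs and fixes $W^2$, leaving each building block invariant, so $F$ is invariant. As a consistency check against \cite{nostro}, where $\tilde{E}$ is shown to be a rational quintic, I would inspect the top-degree part of $F$: its naive degree-eight form factors through $\sigma + 2(p_{01} + p_{12} + p_{20})$ with $\sigma = d_{12}^2 + d_{20}^2 + d_{01}^2$, and this equals $\Vert \mb{d_{12}} + \mb{d_{20}} + \mb{d_{01}} \Vert^2 = 0$. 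Hence the degree-eight form collapses and the true conjugate factor drops to linear order at infinity, leaving $\deg F = 7$; this is compatible with $\tilde{E}$ being the quintic component, the residual quadratic factor arising from the three sign-flipped conjugates that do not correspond to the true-distance locus.

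Finally, to produce \eqref{eq:CartBif8} I would regard $F$ as a polynomial in the single distance $d_0(\x)$, substitute $P_{ij} = p_{ij} d_i d_j$ and the polynomial expressions for $d_i(\x)^2$, and collect coefficients, repackaging mixed terms through the scalars $p_i(\x)$, the constants $p_{ij}$, and the signed areas $\ast(\mb{d_{01}} \wedge \mb{d_0}(\x))$, $\ast(\mb{d_{12}} \wedge \mb{d_0}(\x))$, $\ast(\mb{d_{20}} \wedge \mb{d_0}(\x))$. The main obstacle is twofold: verifying the rationalization identity $F = \prod (Q + 2(\pm P_{01} \pm P_{12} \pm P_{20}))$, and, above all, carrying out the lengthy but mechanical reorganization that turns \eqref{eq:CartBif7} into the explicit receiver-coordinate form \eqref{eq:CartBif8} without error. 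The conceptual content, by contrast, is entirely contained in the identity $a(\boldsymbol{\tau_2}(\x)) = Q(\x) + 2(P_{01}(\x) + P_{12}(\x) + P_{20}(\x))$ of the first step.
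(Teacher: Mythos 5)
The first half of your proposal is correct, and it is in substance the paper's own argument in different clothing: your identity $a(\bs{\tau_2}(\x))=Q(\x)+2(P_{01}(\x)+P_{12}(\x)+P_{20}(\x))$ is the paper's \eqref{eq:CartBif2}--\eqref{eq:CartBif3}, and multiplying the four even--sign conjugates $Q+2(\pm P_{01}\pm P_{12}\pm P_{20})$ is algebraically the same operation as the paper's two successive squarings \eqref{eq:CartBif4}--\eqref{eq:CartBif6}; pairing the factors does reproduce \eqref{eq:CartBif7}, and your permutation--invariance argument is fine.

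The genuine error is your degree count, and it is not a side issue. The theorem asserts that $F=0$ is an equation for the \emph{quintic} $\tilde E$; vanishing of $F$ on $\tilde E$ only gives $\tilde E\subseteq\{F=0\}$, and it is the fact $\deg F=5$, combined with irreducibility of the rational quintic, that makes $F$ proportional to its defining polynomial. Establishing $\deg F=5$ is exactly what the paper's second step does, via the expansion \eqref{eq:CartBif8}, whose top homogeneous part is the degree--$5$ form \eqref{eq:CartBif85}. You instead claim $\deg F=7$ with ``a residual quadratic factor'', which contradicts the statement being proved and the explicit degree--$5$ examples printed right after the theorem. The flaw is the assertion that the true conjugate factor ``drops to linear order at infinity''; in fact it drops to order $R^{-1}$. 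Using $\mb{d_{12}}+\mb{d_{20}}+\mb{d_{01}}=\mathbf{0}$, the true factor is $T(\x)=\Vert \tau_{10}(\x)\,\mb{d_{20}}-\tau_{20}(\x)\,\mb{d_{10}}\Vert^2-W^2$. Since $\vert\tau_{i0}\vert\le d_{i0}$, the function $T$ is \emph{bounded} on the whole plane, so both its order--$R^2$ and its order--$R$ asymptotic coefficients vanish, not only the quadratic one you checked. Moreover, along $\x=R\mb{u}$ with $\Vert\mb{u}\Vert=1$ one has $\tau_{i0}(\x)\to-\mb{d_{i0}}\cdot\mb{u}$, hence $T\to\Vert(\mb{d_{20}}\cdot\mb{u})\mb{d_{10}}-(\mb{d_{10}}\cdot\mb{u})\mb{d_{20}}\Vert^2-W^2=0$ by the planar identity $(\mb{a}\cdot\mb{u})\mb{b}-(\mb{b}\cdot\mb{u})\mb{a}=\ast(\mb{a}\wedge\mb{b})\,\mb{u}^{\perp}$, where $\mb{u}^{\perp}$ is the rotation of $\mb{u}$ by $\pi/2$; geometrically, $\bs{\tau_2}(\x)$ tends to a point of the ellipse $E$ along \emph{every} escape direction, which is item (a) of Section 2. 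Therefore $T=O(1/R)$, while the three flipped conjugates grow like $4d_{12}^2R^2$, $4d_{20}^2R^2$, $4d_{01}^2R^2$, so $F=O(R^5)$ and $\deg F\le 5$; the leading form \eqref{eq:CartBif85} shows the degree is exactly $5$, with no extra conic component. As written, your proposal leaves the ``quintic'' half of the theorem unproved (indeed it denies it); corrected as above, your structural argument would give the degree without the paper's brute--force computation, although the expansion \eqref{eq:CartBif8} would still require separate verification, being itself part of the statement.
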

\begin{proof}
The bifurcation curve $\tilde{E}$ is the preimage of the ellipse
$E$, therefore we obtain a Cartesian equation of $\tilde{E}$ by
substituting $(\tau_1,\tau_2)=(\tau_{10}(\x),\tau_{20}(\x))$ in
\eqref{eq:a}:
\begin{equation}\label{eq:CartBif}
\Vert (d_2(\x)-d_0(\x)) \mb{d_{10}} - (d_1(\x)-d_0(\x)) \mb{d_{20}} \Vert^2 =
(\ast (\mb{d_{10}} \wedge \mb{d_{20}}) )^2.
\end{equation}
Using Definitions \ref{def:1}, we have the more symmetric form
\begin{equation}\label{eq:CartBif2}
\Vert \mb{D_0}(\x) + \mb{D_1}(\x) + \mb{D_2}(\x) \Vert^2 = W^2
\end{equation}
and, after expanding the left hand side,
\begin{equation}\label{eq:CartBif3}
\begin{array}{l}
D_0(\x)^2+D_1(\x)^2+D_2(\x)^2+\\
2\,(\mb{D_0}(\x)\!\cdot\!\mb{D_1}(\x)+
\mb{D_1}(\x)\!\cdot\!\mb{D_2}(\x)+\mb{D_2}(\x)\!\cdot\!\mb{D_0}(\x))=W^2.
\end{array}
\end{equation}
Of course, this is not an algebraic equation with respect to
$x,y$. In order to obtain one, we use again Definitions
\ref{def:1} and we rewrite equation \eqref{eq:CartBif3} as
\begin{equation}\label{eq:CartBif4}
Q(\x)+2P_{12}(\x)=-2\,(P_{01}(\x)+P_{20}(\x)).
\end{equation}
By squaring both sides and reordering, we obtain:
\begin{equation}\label{eq:CartBif5}
\begin{array}{l}
Q(\x)^2-4(P_{01}(\x)^2-P_{12}(\x)^2+P_{20}(\x)^2)=\\
-4\,Q(\x)\,P_{12}(\x)+8\,P_{01}(\x)\,P_{20}(\x)
\end{array}
\end{equation}
Again, the right side of the last equation is not a polynomial,
but squaring once we get the algebraic equation:
\begin{equation}\label{eq:CartBif6}
\begin{array}{l}
(Q(\x)^2-4(P_{01}(\x)^2-P_{12}(\x)^2+P_{20}(\x)^2))^2=
16\,Q(\x)^2\,P_{12}(\x)^2\\
-64\,Q(\x)\,P_{01}(\x)\,P_{12}(\x)\,P_{20}(\x)+64\,P_{01}(\x)^2\,P_{20}(\x)^2,
\end{array}
\end{equation}
that coincides with equation $F(\x)=0$. It is straightforward to
verify that \eqref{eq:CartBif7} is invariant with respect to
permutations of the points $\m{0},\m{1},\m{2}$.

The degree of polynomial \eqref{eq:CartBif7} with respect to $(x,y)$ is 8 at the most. By calculating the Taylor expansion
\eqref{eq:CartBif8} of \eqref{eq:CartBif7} centered at the point
$\m{0}$, we show that $F(\x)$ has degree $5$ and this completes
the proof (the verification of expansion \eqref{eq:CartBif8} is a
simple matter of computation).
\end{proof}

\begin{remark}\rm
If the receivers are not collinear, we have that
$F(\m{0})=F(\m{1})=F(\m{2})=W^8>0$. Therefore $\x\in \tilde{E}^-$
if, and only if, $F(\x)<0$ and $\x\in \tilde{E}^+$ if, and only
if, $F(\x)>0$.
\end{remark}

As an example, we provide the Cartesian equations of the two bifurcation curves of Fig. \ref{fig:x-plane}. The bifurcation curve on the left has equation
\begin{equation*} \begin{split} \tilde{E}: &
-4x^4y+4x^3y^2-4x^2y^3+4xy^4+2x^4+20x^3y-16x^2y^2+4xy^3 + \\ & -6y^4-10x^3-38x^2y+30xy^2+2y^3+18x^2+28xy-22y^2+ \\ & - 12x-4y+1=0
\end{split}
\end{equation*}
while the one of the right has equation
\begin{equation*} \begin{split} \tilde{E}: &
-20x^4y-60x^3y^2-60x^2y^3-60xy^4-40y^5+10x^4+68x^3y+ \\ & +80x^2y^2+84xy^3+82y^4-34x^3-58x^2y-10xy^2-50y^3+30x^2 + \\ & +4xy+22y^2-4x-4y+1=0.
\end{split}
\end{equation*}

Using polynomial \eqref{eq:CartBif8} we can compute an algebraic
expression for the real asymptotic lines of $\tilde{E}$. We refer
to Appendix B of \cite{nostro} for an introduction to projective
geometry. As a preliminary, we prove the following Lemma.
\begin{lemma}\label{Lemma}
$\displaystyle{W=\frac{\ast(\mb{d_{12}}\wedge\mb{d_0}(\x))+ \ast(\mb{d_{20}}\wedge\mb{d_1}(\x))+ \ast(\mb{d_{01}}\wedge\mb{d_2}(\x))}{2}}$.
\end{lemma}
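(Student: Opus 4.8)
The plan is to exploit the bilinearity and antisymmetry of the map $(\mb{u},\mb{v})\mapsto\ast(\mb{u}\wedge\mb{v})$, which by the formula recalled at the beginning of this section is just the determinant $u_1v_2-u_2v_1$. First I would record the three linear identities that relate the edge vectors of the receiver triangle to the source displacement vectors, namely $\mb{d_{12}}=\mb{d_2}(\x)-\mb{d_1}(\x)$, $\mb{d_{20}}=\mb{d_0}(\x)-\mb{d_2}(\x)$ and $\mb{d_{01}}=\mb{d_1}(\x)-\mb{d_0}(\x)$. Each of these is immediate, since $\mb{d_i}(\x)-\mb{d_j}(\x)=(\x-\m{i})-(\x-\m{j})=\m{j}-\m{i}=\mb{d_{ji}}$.

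Next I would substitute these identities into the numerator of the right--hand side and expand by bilinearity. Each of the three terms $\ast(\mb{d_{12}}\wedge\mb{d_0}(\x))$, $\ast(\mb{d_{20}}\wedge\mb{d_1}(\x))$ and $\ast(\mb{d_{01}}\wedge\mb{d_2}(\x))$ then splits into two determinants; after using antisymmetry $\ast(\mb{u}\wedge\mb{v})=-\ast(\mb{v}\wedge\mb{u})$ to put every determinant in a fixed order, the six resulting terms collapse to twice the cyclic sum $\ast(\mb{d_0}(\x)\wedge\mb{d_1}(\x))+\ast(\mb{d_1}(\x)\wedge\mb{d_2}(\x))+\ast(\mb{d_2}(\x)\wedge\mb{d_0}(\x))$. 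So the whole numerator equals $2$ times this cyclic, shoelace--type expression.

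Finally I would identify the cyclic sum with $W$. The key observation, and really the only step that deserves attention, is that such a cyclic expression is invariant under a common translation of its three arguments: replacing each $\mb{d_i}(\x)$ by $\mb{d_i}(\x)+\mb{t}$ changes every determinant only by terms of the form $\ast(\cdot\wedge\mb{t})$ and $\ast(\mb{t}\wedge\cdot)$, which cancel pairwise in the cyclic sum (and $\ast(\mb{t}\wedge\mb{t})=0$). Translating all three vectors by $-\x$ converts them into $-\m{0},-\m{1},-\m{2}$, so the cyclic sum becomes $\ast(\m{0}\wedge\m{1})+\ast(\m{1}\wedge\m{2})+\ast(\m{2}\wedge\m{0})$; expanding $W=\ast(\mb{d_{10}}\wedge\mb{d_{20}})=\ast((\m{1}-\m{0})\wedge(\m{2}-\m{0}))$ by bilinearity shows this is exactly $W$. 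Dividing by $2$ yields the claim. I expect no genuine obstacle beyond this bookkeeping; the one conceptual point worth stressing is the translation invariance, since it is precisely what makes the $\x$--dependent contributions drop out and leaves a quantity that depends only on the positions of the receivers, consistent with the left--hand side being the constant $W$.
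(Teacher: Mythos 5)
Your proof is correct, but it follows a genuinely different organization from the paper's. The paper never forms your cyclic sum: it starts from the left--hand side $\mb{d_{10}}\wedge\mb{d_{20}}$ and, by repeatedly substituting $\mb{d_{ij}}=\mb{d_j}(\x)-\mb{d_i}(\x)$ and using $\mb{d_{01}}+\mb{d_{12}}+\mb{d_{20}}=0$, rewrites it as $\mb{d_{12}}\wedge\mb{d_0}(\x)+\mb{d_{20}}\wedge\mb{d_1}(\x)+\mb{d_{01}}\wedge\mb{d_2}(\x)-\mb{d_{10}}\wedge\mb{d_{20}}$; the original wedge resurfaces with a minus sign, so from $A=B-A$ one gets $2A=B$ and the lemma follows after applying $\ast$. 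You instead expand the right--hand side into twice the ``shoelace'' sum $\ast(\mb{d_0}(\x)\wedge\mb{d_1}(\x))+\ast(\mb{d_1}(\x)\wedge\mb{d_2}(\x))+\ast(\mb{d_2}(\x)\wedge\mb{d_0}(\x))$, observe that this sum is invariant under a common translation of its three arguments, and evaluate it at the translation $\mb{t}=-\x$, which turns it into $\ast(\m{0}\wedge\m{1})+\ast(\m{1}\wedge\m{2})+\ast(\m{2}\wedge\m{0})=W$. Both arguments rest on the same ingredients (bilinearity, antisymmetry, and the linear relations among the displacement vectors), but your route has a conceptual payoff the paper's lacks: translation invariance explains \emph{a priori} why the $\x$--dependence must drop out, i.e., why a combination of source--dependent wedges can equal the constant $W$, whereas in the paper this cancellation emerges only because the starting term happens to reappear with opposite sign. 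The paper's chain is shorter; yours is slightly longer but each step is forced, and the invariance argument would generalize directly to analogous identities with more receivers.
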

\begin{proof}
We use the following identities
\begin{equation}
\mb{d_{01}}+\mb{d_{12}}+\mb{d_{20}}=0,\qquad
\mb{d_{ij}}=\mb{d_j}(\x)-\mb{d_i}(\x),\qquad i,j=0,1,2.
\end{equation}
We have
\begin{equation}
\begin{array}{ll}
\mb{d_{10}}\wedge\mb{d_{20}}
&=(\mb{d_{0}}(\x)-\mb{d_{1}}(\x))\wedge\mb{d_{20}}=\\
&=\mb{d_0}(\x)\wedge(\mb{d_{21}}+\mb{d_{10}})-\mb{d_1}(\x)\wedge\mb{d_{20}}=\\
&=\mb{d_{12}}\wedge\mb{d_0}(\x)+\mb{d_{20}}\wedge\mb{d_1}(\x)
+(\mb{d_2}(\x)+\mb{d_{20}})\wedge\mb{d_{10}}=\\
&=\mb{d_{12}}\wedge\mb{d_0}(\x)+\mb{d_{20}}\wedge\mb{d_1}(\x)
+\mb{d_{01}}\wedge\mb{d_2}(\x)-\mb{d_{10}}\wedge\mb{d_{20}}.
\end{array}
\end{equation}
The Lemma follows from Definition \ref{def:1} and the last identity.
\end{proof}
\begin{definition}
Let $ \AF^2 $ be the affine plane, and let $ \PP^2 $ be the
projective plane obtained by joining the ideal line $ \ell $ to $
\AF^2.$ Let $C$ be an algebraic curve in the affine plane $\AF^2$.
The ideal points of $C$ are the intersection points of $C$ and the
ideal line $ \ell.$ An asymptotic line of $C$ is a line in $ \AF^2
$ tangent to $C$ at one of its smooth ideal points.
\end{definition}

Let $ f(x,y) = 0 $ be the Cartesian equation of a degree $ d $
algebraic curve $ C $ in $ \AF^2.$ We can write it as $$ f(x,y) =
f_d(x,y) + f_{d-1}(x,y) + \dots + f_1(x,y) + f_0 $$ where $
f_i(x,y) $ is homogeneous of degree $ i.$ We embed $ \AF^2 $ into
$ \PP^2 $ by setting $ x = X/U, y = Y/U,$ and $ \ell: U=0 $ is the
ideal line. The curve $ C \subset \PP^2 $ is then defined by $$
F(X,Y,U)= f_d(X,Y) + f_{d-1}(X,Y) U + \dots + f_1(X,Y) U^{d-1} +
f_0 U^d.$$ The ideal points are the solutions, in the sense of
projective geometry, of $ U = f_d(X,Y) = 0.$ Let $ [a:b:0] $ be a smooth
ideal point of $ C.$ So, the line $ r: -bx+ay+c=0 $ is an
asymptotic line for $ C $ if $ [a:b:0] $ is a solution of $
F(X,Y,U) = -bX+aY+cU = 0 $ of multiplicity at least two.

\begin{theorem}
The bifurcation curve $\tilde{E}$ has three real ideal points
$[x_1-x_2:y_1-y_2:0],[x_2-x_0:y_2-y_0:0],[x_0-x_1:y_0-y_1:0]$ and
two complex ones $[1:i:0],[1:-i:0].$\\
The three real asymptotic lines of $\tilde{E}$ have Cartesian equation
\begin{equation}\label{eq:asintoti}\begin{array}{l}
L_0:\ 4\ast(\mb{d_{12}}\wedge\mb{d_0}(\x))-3W=0,\\
L_1:\ 4\ast(\mb{d_{20}}\wedge\mb{d_1}(\x))-3W=0,\\
L_2:\ 4\ast(\mb{d_{01}}\wedge\mb{d_2}(\x))-3W=0.
\end{array}\end{equation}
Finally, we have $L_0\cap L_1\cap L_2=\emptyset.$
\end{theorem}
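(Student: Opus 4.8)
The plan is to read the ideal points off the leading form of $F$, to recover each real asymptote from the two top homogeneous parts of $F$ via the standard asymptote formula, and to deduce the emptiness of the triple intersection directly from the Lemma. I would first extract the degree-$5$ homogeneous part $f_5$ and the degree-$4$ part $f_4$ of $F$; since $F$ has degree $5$ by the previous theorem, $f_5$ is genuinely the leading form. A direct computation of $f_5$ gives the factorization
\[
f_5=c\,\ast(\mb{d_{12}}\wedge\x)\,\ast(\mb{d_{20}}\wedge\x)\,\ast(\mb{d_{01}}\wedge\x)\,(x^2+y^2),\qquad c\neq 0 .
\]
The three real linear factors vanish in the directions of the sides $\mb{d_{12}},\mb{d_{20}},\mb{d_{01}}$, producing the ideal points $[x_1-x_2:y_1-y_2:0]$, $[x_2-x_0:y_2-y_0:0]$, $[x_0-x_1:y_0-y_1:0]$, while $x^2+y^2$ contributes the circular points $[1:\pm i:0]$. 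The three real directions were already identified in Section~2(b) as those of $r_0,r_1,r_2$; the new content is that the two complex ideal points are exactly $[1:\pm i:0]$. As the receivers are not collinear, the three side directions are pairwise distinct and distinct from the isotropic ones, so the five ideal points are simple roots of $f_5$, hence (by irreducibility of $\tilde{E}$) smooth points of the curve, at which an asymptotic line is defined.

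Next I would apply the standard description of asymptotes: for $F=f_5+f_4+\cdots$ and a simple ideal point $[a:b:0]$, the asymptote is
\[
\partial_x f_5(a,b)\,x+\partial_y f_5(a,b)\,y+f_4(a,b)=0 .
\]
At $(a,b)=\mb{d_{12}}$ only the factor $\ast(\mb{d_{12}}\wedge\x)$ of $f_5$ vanishes, so $\nabla f_5(\mb{d_{12}})$ is a nonzero multiple of the gradient of the linear form $\ast(\mb{d_{12}}\wedge\x)$; hence the leading part of this asymptote is proportional to $\ast(\mb{d_{12}}\wedge\x)$, which is precisely the direction of $L_0$. Using $\ast(\mb{d_{20}}\wedge\mb{d_{12}})=-W$ and $\ast(\mb{d_{01}}\wedge\mb{d_{12}})=W$ (both consequences of $\mb{d_{12}}+\mb{d_{20}}+\mb{d_{01}}=0$), the asymptote takes the form
\[
-c\,W^2 d_{12}^2\,\ast(\mb{d_{12}}\wedge\x)+f_4(\mb{d_{12}})=0 .
\]

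It then remains to match the offset, i.e. to show that this affine line coincides with $L_0:\ 4\ast(\mb{d_{12}}\wedge\mb{d_0}(\x))-3W=0$. Since $\mb{d_0}(\x)=\x-\m{0}$, comparing the two equations as affine lines reduces the whole question to the single identity
\[
f_4(\mb{d_{12}})=c\,W^2 d_{12}^2\Big(\ast(\mb{d_{12}}\wedge\m{0})+\tfrac34 W\Big).
\]
This offset identity is the crux of the argument: it requires the explicit degree-$4$ form $f_4$ together with the constant $c$, both written in terms of the receiver coordinates, and the verification that they combine exactly into the coefficient $3W$. Once $L_0$ is established, the formulas for $L_1$ and $L_2$ follow from the invariance of $F$ under permutations of $\m{0},\m{1},\m{2}$, which carries the ideal point of the side $\mb{d_{12}}$ to those of $\mb{d_{20}}$ and $\mb{d_{01}}$.

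Finally, the emptiness of $L_0\cap L_1\cap L_2$ is immediate from the Lemma. If some $\x$ lay on all three lines, then
\[
\ast(\mb{d_{12}}\wedge\mb{d_0}(\x))=\ast(\mb{d_{20}}\wedge\mb{d_1}(\x))=\ast(\mb{d_{01}}\wedge\mb{d_2}(\x))=\tfrac34 W ;
\]
adding these and using the Lemma, whose right-hand side equals $2W$, gives $\tfrac94 W=2W$, hence $W=0$, contradicting non-collinearity. Thus the three asymptotes have no common point. The main obstacle of the whole argument is the offset identity of the third step; by contrast the ideal-point factorization and the intersection statement are short, the latter following formally from the Lemma.
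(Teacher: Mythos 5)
Your proposal is, in outline, the same proof as the paper's: the five ideal points are read off the degree-$5$ leading form of $F$ (your factorization of $f_5$ is exactly the homogeneous quintic part of \eqref{eq:CartBif8}, namely $-64Wd_{01}^2d_{12}^2d_{20}^2\,d_0(\x)^2\ast(\mb{d_{12}}\wedge\mb{d_0}(\x))\ast(\mb{d_{20}}\wedge\mb{d_0}(\x))\ast(\mb{d_{01}}\wedge\mb{d_0}(\x))$, rewritten using the translation invariance of leading forms, so $c=-64Wd_{01}^2d_{12}^2d_{20}^2$); the asymptotes are the tangent lines at these smooth ideal points; and the emptiness of $L_0\cap L_1\cap L_2$ follows by summing the three equations and invoking Lemma~\ref{Lemma}, exactly as in the paper. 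The parts you actually carry out are correct: the simple-root argument for smoothness (cleaner than the paper's bare assertion), the values $\ast(\mb{d_{20}}\wedge\mb{d_{12}})=-W$ and $\ast(\mb{d_{01}}\wedge\mb{d_{12}})=W$, hence the tangent line $-cW^2d_{12}^2\ast(\mb{d_{12}}\wedge\x)+f_4(\mb{d_{12}})=0$, and the intersection argument ($\tfrac{9}{4}W=2W$ forces $W=0$).

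The genuine gap is the one you name yourself: the offset identity $f_4(\mb{d_{12}})=cW^2d_{12}^2\bigl(\ast(\mb{d_{12}}\wedge\m{0})+\tfrac{3}{4}W\bigr)$ is never verified. This is not a side issue. The directions of the real asymptotes (equivalently, the real ideal points as directions of the lines $r_0,r_1,r_2$) were already known from \cite{nostro} and are restated in Section 2; the essential new content of \eqref{eq:asintoti} is the affine offset $-3W$, and that is precisely what your unproven identity encodes. So as written, the proposal establishes the ideal points and the statement $L_0\cap L_1\cap L_2=\emptyset$ (which, as you note, only uses the stated equations of the $L_i$ and Lemma~\ref{Lemma}), but not the asymptote equations themselves. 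The paper closes this same step by a computer-algebra check that each stated line $L_i$ meets the curve with multiplicity $2$ at its ideal point, which is equivalent to your tangent-line computation and equally computational; the difference is that the paper asserts the computation was performed, while your proposal defers it. If you want to finish along your lines, work in coordinates centered at $\m{0}$, i.e. in the variable $\mb{d_0}(\x)$ in which \eqref{eq:CartBif8} is already expanded and in which $L_0$ has no $\m{0}$-dependent constant term: the identity to check then reduces to $g_4(\mb{d_{12}})=\tfrac{3}{4}\,c\,W^3d_{12}^2=-48\,W^4d_{01}^2d_{12}^4d_{20}^2$, where $g_4$ is the degree-$4$ part of \eqref{eq:CartBif8} in $\mb{d_0}(\x)$. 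Extracting $g_4$ is mechanical, since in \eqref{eq:CartBif8} the quantities $p_1(\x)=\mb{d_{20}}\cdot\mb{d_0}(\x)$, $p_2(\x)=\mb{d_{01}}\cdot\mb{d_0}(\x)$ and the starred factors are linear in $\mb{d_0}(\x)$ while $p_{01},p_{12},p_{20},W,d_{ij}$ are constants; but until that single evaluation is done, the central claim of the theorem remains unproved.
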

\begin{proof}
The homogeneous degree--$5$ part of polynomial \eqref{eq:CartBif8} is
\begin{equation}\label{eq:CartBif85}
-64Wd_{01}^2d_{12}^2d_{20}^2\,d_0(\x)^2
\ast(\mb{d_{12}}\wedge\mb{d_0}(\x))
\ast(\mb{d_{20}}\wedge\mb{d_0}(\x))
\ast(\mb{d_{01}}\wedge\mb{d_0}(\x))
\end{equation}
and so it is straightforward to check that the five ideal points
of the statement are its roots, and they all are smooth for $ \tilde{E}.$ By using a package for algebraic
computations, it is easy to prove that $ L_i $ meets $ C $ at the
corresponding ideal point with multiplicity $ 2,$ and this proves
that the line $ L_i $ is asymptotic to $ C.$

The last statement follows from Lemma \ref{Lemma}. In fact, if we
sum the three polynomials \eqref{eq:asintoti} defining the
asymptotic lines, we obtain
\begin{equation}\begin{array}{l}
4\ast(\mb{d_{12}}\wedge\mb{d_0}(\x)+\mb{d_{20}}\wedge\mb{d_1}(\x)+\mb{d_{01}}\wedge\mb{d_2}(\x))-9W=-W.
\end{array}\end{equation}
Thus, the three lines do not have a common intersection point: in
fact, if there exists a common point $ \x_0,$ its coordinates
satisfy also the sum of the three equations of the asymptotic
lines. Such a sum is $ W = 0,$ and so $ \x_0 $ does not exist.
\end{proof}

As $ \tilde{E} $ is a quintic curve, each line $ L_i $ intersects
$ \tilde{E} $ at either $ 1 $ or $ 3 $ real points. Hence, also if
it is not evident from Fig. \ref{fig:x-plane}, the unbounded arcs
of $ \tilde{E} $ definitely belong to different half--planes with
respect to its asymptotic lines.

\section{Conclusion}

In this paper, we recall the state of the art on the localization of
sources in a plane from the TDOAs for the case of $ 3 $ receivers
in the same plane. Then, we focus on a problem still open in
the literature: the computation of the Cartesian algebraic equation of the
bifurcation curve $ \tilde{E},$ that is to say, of the curve in
the plane of source and receivers whose points are sources for
which the hyperbola branches (\ref{hyp-br}) have an asymptotic
line parallel each other. The knowledge of such an equation allows us
to easily solve the problem of finding points in the plane which
are close or belong to such a curve $ \tilde{E}.$ The importance
of computing the equation of $ \tilde{E} $ stems from the fact that
at its points, every localization algorithm has a poor accuracy.
The computation of the Cartesian equation of the bifurcation curve
rests on two steps: first, by squaring a non--polynomial
equation for $ \tilde{E},$ we compute a polynomial; then, by using
a Taylor expansion, we get the explicit degree five equation.
From such equation, it is possible to compute the real asymptotic lines of $ \tilde{E}.$ Notice that it is not possible to compute such lines from the parametric equation of $ \tilde{E}.$

In \cite{nostro} as well as in the present paper, we completed the
geometric study of the noiseless model of localization, encoded in
the TDOA map $ \bt_2,$ for the case of $ 3 $ receivers in a plane.
In a manuscript in preparation, we will conduct a similar
study for a real scenario. Other then the deterministic case, the
needed techniques come from information geometry \cite{Amari2000} and statistics
with algebraic tools \cite{Draisma2013,Kobayashi2013}, and from numerical analysis. Moreover, still
in preparation, we are studying the geometry of the noiseless
model in the case of $ 4 $ or more receivers in a plane.


\bibliographystyle{plain}   
\bibliography{biblio}       

\end{document}